\def\b1{{1\!\!1}}
\def\cF{{\ca F}}
\def\cH{{\ca H}}
\def\cI{{\ca I}}
\def\cL{{\mathsf L}}
\def\cS{{\ca S}}
\def\sI{\mathsf I}
\def\sH{{\mathsf H}}
\def\sP{{\mathsf P}}
\def\sS{{\mathsf S}}
\def\tr{\mbox{tr}}
\def\bC{{\mathbb C}}           
\def\bR{{\mathbb R}}
\def\mL{\mathcal L}
\def\beq{\begin{eqnarray}}
\def\eeq{\end{eqnarray}}
\newcommand{\ca}[1]{{\cal #1}}         
\def\p{\parallel}
\newcommand{\bra}[1]{\langle{#1}|}
\newcommand{\ket}[1]{|{#1}\rangle}
\newcommand{\supp}{\text{supp }}
\newtheoremstyle{thm}
{12pt}
{12pt}
{\itshape}
{}
{\itshape\bfseries}
{}
{1em}
{}
\theoremstyle{thm}
\newtheorem{theorem}{Theorem}
\newtheorem{proposition}[theorem]{Proposition}
\newtheorem{definition}[theorem]{Definition}
\begin{document}

\hfill{\sl  } 
\par 
\bigskip 
\par 
\rm

\begin{center}
\Large{A geometrization of quantum mutual information}

\end{center}

\vspace{0.0 cm}

\begin{center}
 Davide Pastorello\\
\small{Dept. of Mathematics, University of Trento}\\
\small{Trento Institute for Fundamental Physics and Applications (TIFPA-INFN)}\\
\small{via Sommarive 14, Povo (TN); d.pastorello@unitn.it}
\end{center}

\section*{Abstract}
It is well known that quantum mechanics admits a geometric formulation on the complex projective space as a K\"ahler manifold. In this paper we consider the notion of mutual information among continuous random variables in relation to the geometric description of a composite quantum system introducing a new measure of total correlations that can be computed in terms of Gaussian integrals.
\\
\\
\emph{Keywords}: Geometric quantum mechanics; quantum mutual information; quantum correlations.

\section{Introduction}

The initial idea to formulate quantum mechanics as a proper Hamiltonian theory in the complex projective space (where there is a natural K\"ahler structure) was proposed in \cite{Kibble} and developed later in \cite{AS, BSS, BH, DV2} for instance. Within the geometric Hamiltonian formulation the projective space $\sP(\sH)$ constructed on the Hilbert space $\sH$ of the considered quantum theory plays the role of a phase space where quantum observables are represented by scalar functions on $\sP(\sH)$ and quantum dynamics is described in terms of the flow of a Hamiltonian vector field. Moreover quantum states can be represented by probability densities to compute expectation values integrating w.r.t.\!\! a Liouville volume form like in classical statistical mechanics. In \cite{DV2} a complete characterization of the functions on $\sP(\sH)$ that describe quantum observables is provided and the observable $C^*$-algebra (in particular the quantum $\star$-product) is constructed in terms of phase space functions regardless of the notion of linear operators in the underlying Hilbert space.
\\
Quantum mechanics predicts correlations between physical systems that cannot be explained assuming {locality} and {realism}, the \emph{quantum correlations}.
If we consider the state of a bipartite quantum system given by a density matrix $\sigma$ in  $\sH_A\otimes\sH_B$ then the total correlations (both quantum and classical correlations) of $\sigma$ are quantified by the \emph{quantum mutual information}:
\beq\label{QMI}
\sI(\sigma):=\sS(\sigma_A)+\sS(\sigma_B)-\sS(\sigma)
\eeq 
where $\sS$ is the von Neumann entropy $\sS(\sigma):=-\tr(\sigma\log_2\sigma)$ and $\sigma_{A,B}$ are the reduced states calculated as partial traces w.r.t. $\sH_{A,B}$.
 In this paper we address the problem to quantify the amount of total correlations in a quantum state within the self-consistent geometric framework of quantum mechanics, we restrict to the finite-dimensional case. 
\\
In the considered geometric formulation, where quantum states are represented by probability density functions w.r.t. an invariant measure on $\sP(\sH)$ as explained in the next section, our aim is to quantify correlations in a bipartite quantum system by means of a calculation based on the notion of mutual information of continuous random variables.\\
 Let $X$ and $Y$ be continuous random variables with \emph{joint probability density} given by $f$, a measure of their mutual dependence is given by the \emph{mutual information} defined as:
\beq\label{MI}
\cI(X,Y):=\int_{\supp\!\! f} f(x,y)\log_2\left(\frac{f(x,y)}{f_X(x)f_Y(y)}\right)\, dx\, dy
\eeq
where $f_{X,Y}$ are the marginal probability densities of $f$ and the measure $dxdy$ is the product of the reference measures on the set of values of $X$ and $Y$.  
In other words $\cI$ is the Kullback-Leibler divergence between the joint probability density and the product of its marginals, so $\cI(X,Y)=0$ if and only if $X$ and $Y$ are independent. 
\\
In the next section we briefly summarize the geometric formulation of quantum mechanics on the projective space equipped with its natural symplectic and  Riemannian structures. Then we apply the definition of mutual information (\ref{MI}) in the geometric framework investigating its physical meaning, it is not the analogue of the von Neumann mutual information (\ref{QMI}) but it turns out to be a new figure of merit to quantify correlations in quantum states.

\section{Geometrization of Quantum Mechanics on the complex projective space}

Pure states of a quantum system described in the Hilbert space $(\sH, \langle\,\, |\,\,\rangle)$ are represented by the points of the projective space $\sP(\sH):=\frac{\sH}{\sim}\setminus [0]$ where, for $\psi,\varphi\in\sH$, $\psi\sim\varphi$ if and only if $\psi=\alpha\varphi$ with $\alpha\in \bC\setminus\{0\}$. $\sP(\sH)$ is connected and Hausdorff in the quotient topology. It is well-known that the map $\sP(\sH)\ni [\psi]\mapsto \ket\psi\bra\psi\in \mathfrak P_1(\sH)$, with $\p\psi\p=1$, is a homeomorphism where $\mathfrak P_1(\sH)$ is the space of rank-1 orthogonal projectors in $\sH$ equipped with the topology induced by the standard operator norm.  
\\
Let us restrict to the case $\dim\sH=n<+\infty$. The projective space $\sP(\sH)$ has a structure of a $(2n-2)$-dimensional smooth real manifold and the tangent vectors $v\in T_p\sP(\sH)$ have the form $v=-i[A_v, p]$ for some self-adjoint operator $A_v$ on $\sH$ \cite{DV2}. As a real manifold $\sP(\sH)$ can be equipped with a symplectic structure given by the following form:
\beq\label{omega}
\omega_p(u,v):=-i\kappa \,\tr([A_u, A_v]p)\qquad \kappa >0.
\eeq
The value of the constant $\kappa$ is a natural geometric degree of freedom and it can be fixed for convenience of calculus. In this paper we take the choice: $\kappa=n+1$. 
\\
$\sP(\sH)$ can be also equipped with the Riemannian structure induced by the well-known Fubini-Study metric $g$:
\beq\label{g}
g_p(u,v):=-\kappa\,\tr(p([A_u,p][A_v,p]+[A_v,p][A_u,p])).
\eeq 
One can prove that the metric $g$ is compatible with the symplectic form $\omega$ by means of the complex form $j_p:T_p\sP(\sH)\ni v\mapsto i[v,p]\in T_p\sP(\sH)$, i.e. $\sP(\sH)$ is a K\"ahler manifold.\\
As proved in \cite{DV2}, the unique regular Borel measure $\nu$ that is left-invariant w.r.t. the smooth action\footnote{The group $U(n)$ is represented on $\sP(\sH)$ by: 
$U(n)\times\sP(\sH)\ni (U,p)\mapsto UpU^{-1}\in\sP(\sH).$}
  of the unitary group $U(n)$ on $\sP(\sH)$, with $\nu(\sP(\sH))=1$, coincides to the Riemannian measure induced by the metric $g$ and to the Liouville volume form $\omega \wedge \cdots\mbox{($n-1$) times}\cdots\wedge \omega$ up to its normalization. Such a unique $U(n)$-invariant measure presents a useful characterization in terms of the standard Gaussian measure on $\sH$ (as a $2n$-dimensional real vector space)
in the following sense \cite{gibbons, DV1}: For any bounded Borel function $f:\sP(\sH)\rightarrow \bC$ we have:
\beq\label{Gauss}
\int_{\sP(\sH)} f(p) \,d\nu(p) =\frac{1}{(2\pi)^n}\int_\sH f\circ\pi\,(x)\,{e^\frac{-\parallel x\parallel^2}{2}} \,dx
\eeq
where $\pi:\sH\setminus \{0\}\rightarrow \sP(\sH)$ is the canonical projection and $dx$ is the Lebesgue measure on $\bR^{2n}$. \\
In order to give the interpretation of $\sP(\sH)$ as a quantum phase space, let us consider: The set of quantum states $\cS(\sH):=\{\sigma\in\mathcal L(\sH): \sigma\geq 0 , \tr(\sigma)=1\}$ where $\mathcal L(\sH)$ is the space of linear operators in $\sH$, the space of self-adjoint operators $\mathfrak H(\sH)$ representing the quantum observables and the set $\cF$ of bounded Borel functions on $\sP(\sH)$. As proved in \cite{DV2} assuming $\dim\sH>2$ and $\kappa=n+1$, the maps:
\beq\label{3}
\cS(\sH)\ni\sigma\mapsto\rho_\sigma\in\cF\qquad  \rho_\sigma(p):= \tr(\sigma p)\qquad\qquad\qquad\,\,
\eeq
\beq\label{4}
 \mathfrak H(\sH)\ni A\mapsto f_A\in\cF\qquad f_A(p):=(n+1)\tr(Ap)-\tr(A)
\eeq
represent the unique prescription to associate quantum states to probability densities and quantum observables to scalar functions on $\sP(\sH)$ such that quantum expectation values can be calculated as in classical mechanics:
\beq
\int_{\sP(\sH)} f_A\rho_\sigma d\mu=\tr(A\sigma),
\eeq
$ \forall \sigma\in\cS(\sH)$ and $\forall A\in\mathfrak H(\sH)$, where $d\mu=n\,d\nu$. Furthermore (\ref{4}) is the unique way to describe quantum observables as scalar functions on $\sP(\sH)$ in order to represent the solutions of Schr\"odinger equation 
\beq
i\hbar\dot p=[H,p]\qquad H\in\mathfrak H(\sH)
\eeq
 as the flow lines of the Hamiltonian vector field on $\sP(\sH)$ defined by the Hamiltonian function $f_H$ within the symplectic structure of $\sP(\sH)$ \cite{AS, DV2}.
\\
The key result to translate a finite-dimensional quantum theory from the standard linear formulation to the geometric formulation on $\sP(\sH)$ is the bijective correspondence between linear operators on $\sH$ and a class of so-called \emph{frame functions} on the projective space.
\begin{definition}
Let $\sP(\sH)$ be the projective space of the $n$-dimensional Hilbert space $\sH$ and $d_2$ be the geodesic distance induced by the Fubini-Study metric. The set $\{p_i\}_{i=1,...,n} \subset\sP(\sH)$ is called \textbf{frame} in $\sP(\sH)$ if $d_2(p_i,p_j)=\frac{\pi}{2}$ for $i\not = j$.
\\
A map $f:\sP(\sH)\rightarrow\bC$ is called \textbf{frame function} if there exists $W_f\in\bC$ such that:
\beq
\sum_{i=1}^n f(p_i)=W_f,
\eeq
for every frame $\{p_i\}_{i=1,...,n}$ of $\sP(\sH)$.
\end{definition}

\noindent
Note that the definition of \emph{frame} is nothing but a way to represent an orthonormal basis of $\sH$ onto the projectice space. 
The following theorem is proved in \cite{DV1} as a tool for an alternative proof of Gleason's theorem and applied in \cite{DV2} to the geometrization of quantum mechanics.

\begin{theorem}\label{teo1}
Let $\sH$ be a $n$-dimensional Hilbert space with $2<n<\infty$ and $\mL^2(\sP(\sH),\mu)$ be the set of square-integrable functions on $\sP(\sH)$ w.r.t. the measure $\mu$. For every frame function $f\in\mL^2(\sP(\sH),\mu)$ there exists a unique operator $A\in\mL(\sH)$ such that $f(p)=\emph{tr}(Ap)$, $\forall p\in\sP(\sH)$.
\end{theorem}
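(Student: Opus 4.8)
\noindent\emph{Proof strategy.} The plan is to read the frame condition through the harmonic analysis of $U(n)$ acting on $\sP(\sH)\cong\bC\mathrm{P}^{n-1}$. Uniqueness is immediate: if $\tr(Cp)=0$ for every rank-one projector $p=\ket\psi\bra\psi$, then $\al\psi|C\psi\cl=0$ for all $\psi$, so $C=0$. For existence, I would first note that every $f_A(p)=\tr(Ap)$ is already a frame function: a frame is an orthonormal basis, so $\sum_i p_i=\b1$ and $\sum_i\tr(Ap_i)=\tr(A)=:W_{f_A}$ independently of the frame. Splitting $A=\frac{\tr A}{n}\b1+A_0$ with $A_0$ traceless identifies $\{\tr(Ap)\}$ with the constants plus the functions $p\mapsto\al\psi|A_0\psi\cl$, i.e. the bidegree $(1,1)$ part of $\mL^2(\sP(\sH),\mu)$; the goal is to show that no frame function reaches higher degree.

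Next I would invoke the decomposition of $\mL^2(\sP(\sH),\mu)$ under the conjugation action of $U(n)$: because $\sP(\sH)=U(n)/(U(1)\times U(n-1))$ is a rank-one symmetric space, one has $\mL^2(\sP(\sH),\mu)=\bigoplus_{k\ge0}\cH_k$ with the $\cH_k$ pairwise inequivalent irreducible modules, $\cH_k$ being the restrictions of harmonic polynomials of bidegree $(k,k)$; here $\cH_0$ are the constants and $\cH_1$ is the traceless part above, so $\cH_0\oplus\cH_1=\{\tr(Ap)\}$. For a frame function $f=\sum_k f_k$ I would introduce the frame-sum map $\Phi(f)(U):=\sum_i f(Uq_iU^{-1})$ for a fixed reference frame $\{q_i\}$. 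A direct check shows $\Phi$ intertwines the conjugation action with the left regular representation of $U(n)$, and that ``$f$ is a frame function'' means exactly that $\Phi(f)\equiv W_f$ is constant, i.e. lies in the trivial isotype. Since $\Phi$ sends $\cH_k$ into the $\lambda_k$-isotypic subspace, which for $k\ge1$ is nontrivial and (the $\lambda_k$ being distinct) pairwise orthogonal, constancy of $\Phi(f)$ forces $\Phi(f_k)=0$ for every $k\ge1$. On $\cH_1$ this is automatic ($\Phi\equiv0$ there, as $\tr A_0=0$); so everything reduces to the modules $k\ge2$, where $\cH_k$ is irreducible and Schur's lemma makes $\Phi|_{\cH_k}$ either zero or injective.

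The crux is therefore to show $\Phi|_{\cH_k}\ne0$ for $k\ge2$, and I would do so with the zonal spherical harmonic $Z_k\in\cH_k$ based at $p_0=\ket{e_1}\bra{e_1}$: it is radial, $Z_k(p)=P_k(\tr(p_0p))$, with $P_k$ a Jacobi polynomial of exact degree $k$. Since $\tr(p_0Uq_iU^{-1})=|U_{1i}|^2=:u_i$ with $\sum_iu_i=1$, the frame sum is $\Phi(Z_k)(U)=\sum_iP_k(u_i)$, a symmetric function of a point $(u_1,\dots,u_n)$ ranging over the whole probability simplex. Here the hypothesis $n>2$ enters decisively: restricting to the two-dimensional face $u_1+u_2+u_3=1$, $u_4=\dots=u_n=0$ and taking $(u_1,u_2)$ as free coordinates (so $u_3=1-u_1-u_2$), the mixed second derivative of $\Phi(Z_k)$ equals $P_k''(u_3)$. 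Were the frame sum constant, this would force $P_k''\equiv0$, hence $\deg P_k\le1$, contradicting $k\ge2$. Thus $\Phi|_{\cH_k}$ is nonzero, hence injective, and $\Phi(f_k)=0$ yields $f_k=0$; so $f\in\cH_0\oplus\cH_1$ and $f(p)=\tr(Ap)$ for a unique $A$.

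The hard part is not the group theory but the interface between the pointwise frame identity and the $L^2$ (almost-everywhere) setting: a priori $f$ is an $\mL^2$-class while the frame condition constrains genuine pointwise values. I would circumvent this by running the argument on the Fourier components---testing the constant $\Phi(f)$ against matrix coefficients of $\lambda_k$ and using $U(n)$-invariance of $\mu$---so that only the $L^2$ expansion $f=\sum_kf_k$ and the smoothness of the finite harmonics $f_k$ are needed, not pointwise control of $f$ itself. The residual points are routine: that $\Phi$ is indeed an intertwiner, that the zonal $Z_k$ has exact degree $k$ (equivalently lies in $\cH_k$ and no lower module), and the elementary reduction to a mixed partial derivative on a simplex face, which is exactly where $n=2$ would fail.
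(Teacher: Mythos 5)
A preliminary remark: this paper does not actually prove Theorem~\ref{teo1} --- it imports it from \cite{DV1}, whose proof is precisely the harmonic-analysis argument you reconstruct: decomposition of $\mL^2(\sP(\sH),\mu)$ into pairwise inequivalent irreducible $U(n)$-modules $\cH_k$ (the generalized spherical harmonics of bidegree $(k,k)$), identification of $\cH_0\oplus\cH_1$ with $\{p\mapsto \tr(Ap)\}$, and elimination of every $k\geq 2$ component via zonal harmonics and Jacobi polynomials, with $n>2$ entering exactly where you put it. Your frame-sum intertwiner $\Phi$, the Schur dichotomy on $\cH_k$, and the mixed-derivative computation on the simplex face are all correct.

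There is, however, one genuine gap, located exactly at the point you flag and then dismiss too quickly. Running the argument ``on the Fourier components'' does make everything rigorous at the level of $\mL^2$-classes, but then its output is only that $f_k=0$ in $\mL^2$ for all $k\geq 2$, i.e.\ that $f(p)=\tr(Ap)$ for $\mu$-\emph{almost every} $p$. The theorem asserts the identity for \emph{every} $p\in\sP(\sH)$, and this upgrade is not bookkeeping: a function can satisfy the pointwise frame condition, vanish almost everywhere, and still fail to vanish identically when $n=2$ (take $h=0$ everywhere except on a single orthogonal pair $\{p_+,p_-\}$, with $h(p_\pm)=\pm c$; every frame sum is then zero). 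So the passage from a.e.\ to everywhere is a real lemma, it must again use $n\geq 3$, and your proposal contains no argument for it.

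The missing lemma can be closed as follows. Set $h:=f-\tr(A\,\cdot\,)$, a pointwise frame function (difference of two such) vanishing $\mu$-a.e. Since the map sending a frame to its $i$-th entry pushes the invariant measure on frames to the invariant measure on $\sP(\sH)$, almost every frame avoids the null set $\{h\neq 0\}$, hence $W_h=0$. Fix $p_0$ and take any ray $s\notin\{p_0\}\cup\sP(p_0^{\perp})$; let $V$ be the $2$-plane spanned by $p_0$ and $s$, let $s'$ be the ray of $V$ orthogonal to $s$, and $q(s)$ the ray of $V$ orthogonal to $p_0$. Completing the two frames $\{p_0,q(s)\}$ and $\{s,s'\}$ of $V$ by one common frame of $V^{\perp}$ (here $n\geq 3$ is used) and subtracting the two frame identities gives $h(p_0)+h(q(s))=h(s)+h(s')$. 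The map $s\mapsto s'$ is a smooth involution and $s\mapsto q(s)$ a smooth submersion onto $\sP(p_0^{\perp})$, both compatible with the respective measure classes; therefore $h=0$ a.e.\ on $\sP(\sH)$ forces $h(q)=-h(p_0)$ for almost every $q\in\sP(p_0^{\perp})$, with respect to the invariant measure of $\sP(p_0^{\perp})$. But the restriction of $h$ to $\sP(p_0^{\perp})$ is a frame function there, with weight $W_h-h(p_0)=-h(p_0)$; evaluating it on a generic frame of $p_0^{\perp}$, all of whose $n-1$ entries carry the value $-h(p_0)$, yields $-(n-1)h(p_0)=-h(p_0)$, i.e.\ $(n-2)h(p_0)=0$. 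Since $n>2$, we get $h(p_0)=0$ for every $p_0$, which is the pointwise statement the theorem requires.
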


\noindent
The converse is true: Any function $\sP(\sH)\ni p\mapsto \tr(Ap)$, with $A\in\mL(\sH)$, is a frame function in $\mL^2(\sP(\sH),\mu)$, so we have a bijective correspondence to faithfully represent linear operators as functions on $\sP(\sH)$. The set $\cF^2(\sH)$ of frame functions in $\mL^2(\sP(\sH),\mu)$ can be endowed with a structure of $C^*$-algebra\footnote{The $C^*$-norm and the quantum $\star$-product in $\cF^2(\sH)$ are explicitely constructed in \cite{DV2}.} in order to obtain the observable algebra of a quantum system in terms of phase space functions. Within this picture quantum observables are the real functions in $\cF^2(\sH)$ and quantum states are particular probability densities on $\sP(\sH)$. In this sense we can state an equivalent formulation of quantum mechanics on a complex projective space that presents the general geometric structure of a classical theory in a symplectic manifold (the quantumness of the theory is algebraically encoded in the non-commutative product on $\cF^2(\sH)$).
\begin{definition}
Let $\sH$ be a Hilbert space $\sH$ with $\dim\sH=n>2$ and $\mu$ be the unique $U(n)$-invariant regular Borel measure on $\sP(\sH)$ such that $\mu(\sP(\sH))=n$.\\ A frame function $\rho:\sP(\sH)\rightarrow [0,1]$ with $\int_{\sP(\sH)}\rho\, d\mu=1$ is called \textbf{Liouville density} on $\sP(\sH)$.
\end{definition}

\noindent
Let us denote the set of Liouville densities on $\sP(\sH)$ as $\cL(\sH)$. By theorem \ref{teo1} any Liouville density $\rho\in\cL(\sH)$ describes a unique density matrix $\sigma\in\cS(\sH)$ in the following sense $\rho:p\mapsto\tr(\sigma p)$.\\
In the next section we investigate how the notion of quantum mutual information can be introduced in terms of Liouville densities on the projective space as a classical-like mutual information among continuous random variables.

\section{Quantum mutual information within geometric formulation}

The states of a bipartite quantum system $A+B$ must be described by Liouville densities on the projective space $\sP(\sH_A\otimes\sH_B)$ instead of $\sP(\sH_A)\times\sP(\sH_B)$ as suggested by a rough analogy to the phase space of classical systems. However $\sP(\sH_A)\times\sP(\sH_B)$ can be embedded in $\sP(\sH_A\otimes\sH_B)$ by the celebrated Segre embedding:
\beq
Seg:\,\sP(\sH_A)\times\sP(\sH_B) \longrightarrow \sP(\sH_A\otimes\sH_B)
\eeq
$$
([\alpha_1:\cdots:\alpha_n], [\beta_1:\cdots :\beta_m])\mapsto [\alpha_1\beta_1:\alpha_1\beta_2:\cdots: \alpha_n\beta_m]
$$
where the action of $Seg$ is expressed in homogenous coordinates  of the rays ($\dim\sH_A=n$ and $\dim\sH_B=m$). Equivalentely, one can express the action of the Segre embedding in terms of rank-1 orthogonal projectors:
\beq
Seg\,(\ket{\psi_A}\bra{\psi_A}, \ket{\psi_B}\bra{\psi_B})=\ket{\psi_A\otimes \psi_B}\bra{\psi_A\otimes \psi_B}\, , \qquad \p\psi_A\p=\p\psi_B\p=1.
\eeq
The image of $Seg$ is the well-known \emph{Segre variety} and gives the set of separable pure states. 
\\
The proposition below gives a characterization of entangled states in terms of Liouville densities on the projective space.

\begin{proposition}\label{propp}
Let $\sH_A$ and $\sH_B$ be Hilbert spaces with dimension larger than 2. The Liouville density $\rho\in\cL(\sH_A\otimes\sH_B)$ describes a separable (non-entangled) state if and only if it satisfies
\beq\label{10}
\rho\circ Seg(p_A, p_B)=\sum_n \lambda_n\,\rho_{A n}(p_A)\rho_{B n}(p_B)\qquad \forall (p_{A}, p_B)\in\sP(\sH_{A})\times\sP(\sH_B),
\eeq
 where $\{\lambda_n\}_{n}$ are statistical weights, i.e. $\lambda_n\geq 0$ with $\sum_n\lambda_n =1$, and $\{\rho_{An}\}_n\subset\cL(\sH_A)$, $\{\rho_{Bn}\}_n\subset\cL(\sH_B)$.

\end{proposition}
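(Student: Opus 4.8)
The plan is to pull the whole statement back to the single observation that, for the density matrix $\sigma\in\cS(\sH_A\otimes\sH_B)$ represented by $\rho$ through Theorem \ref{teo1} (so that $\rho(p)=\tr(\sigma p)$), composition with the Segre map is just an expectation against product projectors. Writing $p_A=\ket{\psi_A}\bra{\psi_A}$ and $p_B=\ket{\psi_B}\bra{\psi_B}$ with $\p\psi_A\p=\p\psi_B\p=1$, the embedding gives $Seg(p_A,p_B)=p_A\otimes p_B$, hence
\[
\rho\circ Seg(p_A,p_B)=\tr\!\big(\sigma\,(p_A\otimes p_B)\big).
\]
Both implications will be read off from this identity.

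For the \emph{only if} part I would assume $\sigma$ separable, i.e. $\sigma=\sum_n\lambda_n\,\sigma_{An}\otimes\sigma_{Bn}$ with $\sigma_{An}\in\cS(\sH_A)$, $\sigma_{Bn}\in\cS(\sH_B)$ and statistical weights $\lambda_n$. Plugging this into the identity and using multiplicativity of the trace on tensor products, $\tr((\sigma_{An}\otimes\sigma_{Bn})(p_A\otimes p_B))=\tr(\sigma_{An}p_A)\,\tr(\sigma_{Bn}p_B)$, I set $\rho_{An}(p_A):=\tr(\sigma_{An}p_A)$ and $\rho_{Bn}(p_B):=\tr(\sigma_{Bn}p_B)$. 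By the converse of Theorem \ref{teo1} these are frame functions, and since $\sigma_{An},\sigma_{Bn}$ are states they take values in $[0,1]$ and integrate to $1$, so $\rho_{An}\in\cL(\sH_A)$ and $\rho_{Bn}\in\cL(\sH_B)$. This is exactly the factorization (\ref{10}).

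For the \emph{if} part I would start from (\ref{10}) and use Theorem \ref{teo1} in reverse, representing each $\rho_{An},\rho_{Bn}$ as $\tr(\sigma_{An}\,\cdot\,)$, $\tr(\sigma_{Bn}\,\cdot\,)$ with $\sigma_{An}\in\cS(\sH_A)$, $\sigma_{Bn}\in\cS(\sH_B)$. The right-hand side of (\ref{10}) then collapses to $\tr(\tau\,(p_A\otimes p_B))$ for the manifestly separable operator $\tau:=\sum_n\lambda_n\,\sigma_{An}\otimes\sigma_{Bn}$. Comparing with the identity above yields $\bra{\psi_A\otimes\psi_B}(\sigma-\tau)\ket{\psi_A\otimes\psi_B}=0$ for all unit vectors $\psi_A,\psi_B$, and it remains to promote this vanishing of product \emph{diagonal} expectations to the operator equality $\sigma=\tau$.

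This last step is the only real obstacle. I expect to handle it by iterated polarization: fixing $\psi_B$, the compression $(\mathrm{id}\otimes\bra{\psi_B})(\sigma-\tau)(\mathrm{id}\otimes\ket{\psi_B})$ on $\sH_A$ has identically zero diagonal, hence vanishes, because over $\bC$ the polarization identity recovers every matrix element from diagonal values; compressing likewise in the $B$ factor gives $\bra{\phi_A\otimes\phi_B}(\sigma-\tau)\ket{\psi_A\otimes\psi_B}=0$ for all four vectors. Since tensor products of orthonormal bases of $\sH_A$ and $\sH_B$ form an orthonormal basis of $\sH_A\otimes\sH_B$, all matrix elements of $\sigma-\tau$ vanish, so $\sigma=\tau$ is separable. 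The complex scalar field is essential in the polarization, whereas the hypothesis $\dim\sH_{A,B}>2$ enters only through the applicability of Theorem \ref{teo1}.
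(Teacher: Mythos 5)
Your proof is correct, and it differs from the paper's in how it handles the one genuinely delicate point. The paper also passes to operators via Theorem \ref{teo1}, but it organizes the argument around a lifting lemma for frame functions: expanding an arbitrary $f\in\cF^2(\sH_A)\otimes\cF^2(\sH_B)$ in a basis $\{e_k\otimes h_l\}$, it claims that $f=g\circ Seg$ for a \emph{unique} $g\in\cF^2(\sH_A\otimes\sH_B)$, namely $g(p)=\sum_{kl}c_{kl}\,\tr[(A_k\otimes B_l)p]$, and then identifies the unique lift of $\eta=\sum_n\lambda_n\rho_{An}\rho_{Bn}$ as the Liouville density of $\sum_n\lambda_n\sigma_{An}\otimes\sigma_{Bn}$; since $\rho$ is also a lift of $\eta$, uniqueness forces $\rho(p)=\tr(\sigma p)$ with $\sigma$ separable. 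The ``only if'' half of that uniqueness claim is exactly the statement you isolate and prove: an operator whose quadratic form vanishes on all product vectors is zero. The paper leaves this step implicit inside its ``if and only if,'' whereas you supply an explicit argument by double compression and complex polarization (correctly noting that the second compression $(\bra{\phi_A}\otimes\mathrm{id})(\sigma-\tau)(\ket{\psi_A}\otimes\mathrm{id})$ need not be self-adjoint, so the complex field is what makes polarization work). So your route is operator-theoretic and self-contained where the paper's is terse; what the paper's formulation buys in exchange is a slightly more general lifting statement for the whole tensor product $\cF^2(\sH_A)\otimes\cF^2(\sH_B)$, which fits the frame-function machinery it develops elsewhere. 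Both proofs use the hypothesis $\dim\sH_{A,B}>2$ only through Theorem \ref{teo1}, and your treatment of the easy ``only if'' direction (checking that $p_A\mapsto\tr(\sigma_{An}p_A)$ is a Liouville density) is a detail the paper simply calls obvious.
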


\begin{proof}
If $\sigma$ is a separable state then (\ref{10}) is obviously true for $\rho:p\mapsto\tr(\sigma p)$, let us prove the non-trivial implication. Consider the vector spaces $\cF^2(\sH_A)$ and $\cF^2(\sH_B)$ of square $\mu$-integrable frame functions on $\sP(\sH_A)$ and $\sP(\sH_B)$ respectively. We need to show that for any $f\in\cF^2(\sH_A)\otimes\cF^2(\sH_B)$ there exists a unique $g\in\cF^2(\sH_A\otimes\sH_B)$ such that $f=g\circ Seg$. Let $\{e_{k}\}_k$ and $\{h_{l}\}_l$ be bases of $\cF^2(\sH_A)$ and $\cF^2(\sH_B)$ respectively and consider a function $f\in\cF^2(\sH_A)\otimes\cF^2(\sH_B)$ written in terms of the basis $\{e_{k}\otimes h_{l}\}$:
\beq
f(p_A, p_B)=\sum_{kl} c_{kl}e_{k}(p_A)h_{l}(p_B).
\eeq
By theorem \ref{teo1} we have: $f(p_A, p_B)=\sum_{kl} \tr(A_k p_A)\tr(B_k p_B)$ with $A_k\in\mL(\sH_A)$ and $B_l\in\mL(\sH_B)$ that are univocally fixed for any $k$ and $l$. So we have that $f=g\circ Seg$, with $g\in\cF^2(\sH_A\otimes\sH_B)$, if and only if: 
\beq
g(p)=\sum_{kl} c_{kl} \,\tr[(A_k \otimes B_l) p].
\eeq
Let $\eta:\sP(\sH_A)\times\sP(\sH_B)\rightarrow [0,1]$ defined by $\eta(p_A, p_B):=\sum_n\lambda_n\rho_{An}(p_A)\rho_{Bn}(p_B)$ where $\rho_{A_n}$ and $\rho_{Bn}$ are  Liouville densities on $\sP(\sH_A)$ and $\sP(\sH_B)$ respectively. The unique function $\rho\in\cF^2(\sH_A\otimes\sH_B)$ satisfying $\eta=\rho\circ Seg$ is given by  $\rho(p)=\sum_n\lambda_n\tr[(\sigma_{An}\otimes\sigma_{Bn})p]$ where $\sigma_{An}$ and $\sigma_{Bn}$ are density matrices on $\sH_A$ and $\sH_B$. Therefore $\rho(p)=\tr(\sigma p)$ with $\sigma=\sum_n\lambda_n\sigma_{An}\otimes\sigma_{Bn}$, i.e. $\rho$ is a Liouville density describing a non-entangled state.
\end{proof}

\noindent
The Segre variety represents (by means of the embedding)  the \emph{classical-like} phase space of the composite system in the sense of  the cartesian product of the quantum phase spaces of the single subsystems. Thus the most natural interpretation of the above result is the following: If there is no entanglement then the Liouville density describing the considered quantum state presents the form of a classical bipartite state (i.e. a statistical mixture of products) when restricted to the Segre variety. 
\\
In order to introduce the notion of quantum mutual information we need to consider the marginal probability densities of $\rho\circ Seg:\sP(\sH_A)\times\sP(\sH_B)\rightarrow[0,1]$ with $\rho\in\cL(\sH_A\otimes\sH_B)$:
\beq\label{marginal}
\rho_A(p_B)=\int_{\sP(\sH_A)} \rho\circ Seg (p_A,p_B)\, d\mu_A(p_A).
\eeq
\\
Applying theorem \ref{teo1} it is easy to prove that the unique operator $\sigma_A\in\mL(\sH_B)$ such that $\rho_A:p_B\mapsto \tr(\sigma_Ap_B)$ is nothing but the partial trace of the density matrix associated to $\rho$.
In particular the statement of proposition \ref{propp} implies that a Liouville density $\rho$ describes a completely uncorrelated state $\sigma$ (i.e. it is factored as $\sigma=\sigma_A\otimes\sigma_B$)  if and only if it satisfies $\rho\circ Seg (p_A, p_B)=\rho_A(p_A)\rho_B(p_B)$, for all $p_{A,B}\in\sP(\sH_{A,B})$, where $\rho_{A}$ and $\rho_{B}$ are the marginal probability densities of $\rho\circ Seg$ on $\sP(\sH_A)$ and $\sP(\sH_B)$ respectively. In other words the restriction of $\rho$ to the Segre variety has the form of an uncorrelated classical state.
\\
In view of the latter comments, the key idea of our approach is to consider a quantum system as a continuous random variable valued in the complex projective space $\sP(\sH)$ (that has cardinality of the continuum as a connected Hausdorff manifold \cite{N}) equipped with the measure $\mu$ of the geometric formulation discussed in the previous section. Given a Liouville density $\rho\in\cL(\sH_A\otimes\sH_B)$ describing the state of the system $A+B$, we can calculate the mutual information between $A$ and $B$ considering $\rho\circ Seg$ as a joint probability density according to classical defintion (\ref{MI}):
\beq
\mathcal I(\rho):=\int_{\supp(\rho\circ Seg)} \rho\circ Seg (p_A,p_B)\,\,\log_2\left(\frac{\rho\circ Seg (p_A,p_B)}{\rho_A(p_A)\rho_B(p_B)}\right) \,d\mu_A(p_A)d\mu_B(p_B).
\eeq
$\mathcal I$ does not obviously correpsond to the Von Neumann mutual information calculated from the density matrix associated to $\rho$. However it presents the properties to be a good measure of total correlations in a quantum state.
In fact $\mathcal I(\rho)=0$ if and only if $\rho\circ Seg$ is the product of the marginal probability densities that is the case where there are no classical or quantum correlations in the considered state.  By proposition \ref{propp} if $\rho$ describes a separable mixed state then $\cI$ is a measure of the correlations due to the incoherent superposition in the considered state. On the other hand if we have an entangled pure state, so the correlations are purely quantum, $\cI$ quantifies  how much $\rho\circ Seg$ is different from the product of the marginals as a Kullback-Leibler divergence thus it gives a measure of entanglement for pure states. 
\\
Let us consider $\rho\in\cL(\sH_A\otimes\sH_B)$ describing a maximally entangled state and calculate $\cI(\rho)$. We can apply the following identity involving differential entropies:
\beq
\cI(\rho)=\cH(\rho_A)-\cH(\rho_A|\rho_B)
\eeq
 where $\cH(\rho_A)$ is the differential entropy\footnote{The differential entropy of a continuous random variable $X$ with density $f$ is defined as $\mathcal H(X):=\int f\log_2 f dx$ and it is well-known that it does not preserve the properties of the information entropy for discrete random variables. However the continuous mutual information $\cI$ still be a meaningful measure of mutual dependece for continuous random variables and the identity $\cI(X,Y)=\cH(X)-\cH(X|Y)$ still hold in the continuous case.} of $A$ 
and $\cH(\rho_A|\rho_B)$ is the differential entropy of $A$ conditioned on $B$ 
 i.e. it is the entropy calculated from the Liouville density describing the state of subsystem $A$ when the subsystem $B$ is in a known pure state. Considering the system in the maximally entangled state, if one observes the subsystem $B$ in the pure state represented by the ray $p_B\in\sP(\sH_B)$ during some local measurement process\footnote{In the sense of \emph{post-selection mapping}.}  
then the pure state of subsystem $A$ after the measurement is known deterministically, so one would suspect that the conditional entropy is zero, however $\cH(\rho_A|\rho_B)$ is non-zero. In fact the differential entropy given by a Liouville density describing a pure state is non-zero but a constant that does not depend on the pure state or Hilbert space dimension as stated by the next proposition.
\begin{proposition}
For any Liouville density $\rho_\sigma\in\cL(\sH)$ describing a pure state $\sigma$ the related differential entropy is:
\beq
\cH(\rho_\sigma)=-\int_{\sP(\sH)} \rho_\sigma(p) \log_2(\rho_\sigma(p))\, d\mu(p)=(2\gamma-2)\log_2 e-2
\eeq
where $\mu$ is the usual invariant measure on $\sP(\sH)$ and $\gamma$ is the Euler-Mascheroni constant.
\end{proposition}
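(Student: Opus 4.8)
The plan is to collapse everything to a one–mode Gaussian integral via the characterization (\ref{Gauss}) of $\mu$. First I would exploit the $U(n)$–invariance of $\mu$: since $\rho_{U\sigma U^{-1}}(p)=\rho_\sigma(U^{-1}pU)$ and $\mu$ is invariant under $p\mapsto UpU^{-1}$, the quantity $\cH(\rho_\sigma)$ is unchanged under $\sigma\mapsto U\sigma U^{-1}$. Hence there is no loss in taking the pure state $\sigma=\ket{e_1}\bra{e_1}$ for a fixed unit vector $e_1$, so that $\rho_\sigma(p)=|\langle e_1,\phi\rangle|^2$ depends only on the first coordinate of a representative $\phi$. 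This already forces the answer to be independent of the chosen pure state and reduces the whole statement to one explicit integral.

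Next I would feed $f=-\rho_\sigma\log_2\rho_\sigma$ into the Gaussian formula (\ref{Gauss}), rewriting $\cH(\rho_\sigma)$ as an integral over $\sH\cong\bR^{2n}$ against the weight $(2\pi)^{-n}e^{-\|x\|^2/2}$ of a function of the single complex coordinate $z:=\langle e_1,x\rangle$. The decisive point is that the integrand involves only $z$, so the Gaussian integral over the remaining $n-1$ complex coordinates factorizes and each factor equals $1$. This is precisely the mechanism that yields a dimension–independent constant: the $n-1$ spectator modes integrate away, leaving
\beq
\cH(\rho_\sigma)=-\frac{1}{2\pi}\int_{\bC}|z|^2\,\log_2|z|^2\;e^{-|z|^2/2}\,d^2z .
\eeq

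Passing to polar coordinates $z=\sqrt{u}\,e^{i\theta}$ and integrating out $\theta$ collapses this to $-\tfrac12\int_0^\infty u\,\log_2 u\,e^{-u/2}\,du$. With the substitution $u=2v$ and the splitting $\log_2(2v)=1+\log_2 v$, the constant part produces an elementary $\Gamma$–integral while the $\log_2 v$ part equals $\log_2 e\int_0^\infty v\ln v\,e^{-v}dv=\log_2 e\;\Gamma'(2)$. This is where the Euler–Mascheroni constant enters, through the classical value $\Gamma'(2)=\Gamma(2)(1-\gamma)=1-\gamma$. Collecting the two contributions and using $\log_2 e\cdot\ln 2=1$ to absorb the mixed constants delivers $(2\gamma-2)\log_2 e-2$, as claimed.

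I expect the only genuine obstacle to be the evaluation of the logarithmic Gaussian integral together with the careful bookkeeping between natural and base–$2$ logarithms; the appearance of $\gamma$ is controlled entirely by the single identity $\Gamma'(2)=1-\gamma$, and everything else (the invariance reduction, the factorization of the orthogonal modes, the angular integration) is routine. The conceptual heart is the second step: the product structure of the standard Gaussian measure makes the $(n-1)$ spectator coordinates drop out, which is what forces $\cH(\rho_\sigma)$ to be independent of $\dim\sH$ as well as of $\sigma$.
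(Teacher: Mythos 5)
Your proposal is correct and follows essentially the same route as the paper's proof: reduction to a fixed pure state by $U(n)$-invariance of $\mu$, transfer to a Gaussian integral over decomplexified $\sH$ via (\ref{Gauss}), factorization of the $n-1$ spectator modes, and a polar-coordinate evaluation of the remaining one-mode integral. The only (harmless) difference is that you derive the final radial integral explicitly through $\Gamma'(2)=1-\gamma$, whereas the paper simply quotes the value $\int_0^{+\infty} x^3\log_2(x^2)e^{-x^2/2}\,dx=2+(2-2\gamma)\log_2 e$ as known.
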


\begin{proof}
The value of $\cH(\rho_\sigma)$ does not depend on $\sigma$ because of the transitive action of the unitary group on $\sP(\sH)$ and the invariance of $\mu$ under unitary transformations. Let $U$ be a unitary operator on $\sH$ and consider the pure state $\sigma'=U\sigma U^*$, we have $\rho_{\sigma'}(p)=\rho_\sigma(U^*p U)$:
\beq
\cH(\rho_{\sigma'})=-\int_{\sP(\sH)} \rho_\sigma(U^*pU) \log_2(\rho_\sigma(U^*pU))\, d\mu(p)\qquad\qquad\qquad\quad
\eeq 
$$\,=-\int_{\sP(\sH)} \rho_\sigma(U^*pU) \log_2(\rho_\sigma(U^*pU))\, d\mu(U^*pU)=\cH(\rho_\sigma).$$
Assume to fix an orthonormal basis of the $n$-dimensional Hilbert space $\sH$ and let us calculate $\cH(\rho_\sigma)$ for the pure state given by $\sigma=\ket\psi\bra\psi=$ diag$(1,0,...,0)$ in the considered coordinates. Applying (\ref{Gauss}) we have:
\beq
\cH(\rho_\sigma)=-\frac{1}{(2\pi)^n}\int_{\sH}|\langle \psi|x\rangle|^2\log_2 (|\langle \psi|x\rangle|^2) e^{-\frac{\parallel x\parallel^2}{2}} dx.
\eeq
Since $|\langle\psi|x\rangle|^2=|x_1|^2$ where $x_1=\alpha_1+i\beta_1$ is the first complex component of the vector $x$ w.r.t. the fixed basis and $\sH$ is seen as a $2n$-dimensional real space, we can re-write the integral as:
\beq
\cH(\rho_\sigma)=-\frac{1}{(2\pi)^n}\int_{\bR^{2n}}(\alpha_1^2+\beta_1^2)\log_2(\alpha_1^2+\beta_1^2)e^{-\frac{1}{2} \sum_{i=1}^n \alpha_i^2+\beta_i^2} \,\,d\alpha_1 d\beta_1 \cdots d\alpha_nd\beta_n\qquad
\eeq
$$\quad=  -\frac{1}{(2\pi)^n}\int_{\bR^{2}}(\alpha_1^2+\beta_1^2)\log_2(\alpha_1^2+\beta_1^2)e^{-\frac{1}{2} \alpha_1^2+\beta_1^2} \,\,d\alpha_1 d\beta_1 \left( \int_{-\infty}^{+\infty} e^{-\frac{1}{2}y^2} dy\right)^{2n-2} $$
$$=-\frac{1}{2\pi}\int_0^{2\pi} d\theta \int_0^{+\infty} r^3 \log_2(r^2)e^{-\frac{r^2}{2}}  dr =  (2\gamma-2)\log_2 e-2 \qquad\qquad\qquad\,\,$$
\\
where we used the known integral $\int_0^{+\infty} x^3\log_2(x^2)\exp(-x^2/2)dx=2+(2-2\gamma)\log_2 e$.
\end{proof}

\noindent
Let us give a remark about the result above: Even if a pure state is represented by a single point $p_0\in\sP(\sH)$, the differential entropy of a quantum system in a pure state is non-zero. In fact the Liouville density $\rho$ describing a pure state is not a Dirac delta centered in $p_0$, like a sharp classical state, but it is a \emph{smeard} distribution encoding the statistic produced by any possible measurement process\footnote{Note that we can describe a pure state $\sigma=\ket\psi\bra\psi$ as a Dirac mass if and only if we consider the outcomes of a fixed PVM-measurement $\{P_i\}_i$ such that $\sigma\in \{P_i\}_i$.} on the system.
\\
Returning to the calculation of $\cI(\rho)$ where $\rho\in\cL(\sH_A\otimes\sH_B)$ describes a maximally entangled state, the marginal density is the constant function $\rho_A=d^{-1}$, where $d=\dim\sH_A=\dim\sH_B$, its differential entropy is:
\beq
\cH(\rho_A)=-\int_{\sP(\sH_A)} \rho_A(p)\log_2(\rho_A(p))\, d\mu_A(p)=\log_2d,
\eeq
which corresponds to the value of the Von Neumann entropy of the associated reduced density matrix in this particular case. Thus we have:
\beq
\cI(\rho)=\log_2d+2+(2-2\gamma)\log_2e\simeq \log_2d+3.22,
\eeq
on the other hand the well-known von Neumann mutual information (\ref{QMI}) of a maximally entangled state $\sigma\in\cS(\sH_A\otimes\sH_B)$ is $\sI(\sigma)=2\log_2d$.\\
\noindent
After the discussion above, we can interpret $\cI$ as a measure to quantify the total amount of correlations inside a bipartite quantum system. Nevertheless it is not the direct translation of the von Neumann mutual information to the geometric formulation.
\\ 
 From the viewpoint of the direct calculation of $\cI$ starting from a density matrix, one can exploit the fact that the measure $\mu$ on $\sP(\sH)$ is the image of the standard Gaussian measure on decomplexified $\sH$ by means of the canonical projection. Let $\sigma\in\cS(\sH_A\otimes \sH_B)$ be the density matrix associated to the Liouville density $\rho\in\cL(\sH_A\otimes\sH_B)$, then:
\beq\label{MImatrix}
\cI(\rho)=\int_{X} \langle x\otimes y|\sigma\, x\otimes y\rangle\, \log_2\left( \frac{\langle x\otimes y|\sigma\, x\otimes y\rangle}{\langle x|\sigma_A x\rangle\langle y|\sigma_B y\rangle}  \right) d\mu_{G_A}(x)d\mu_{G_B}(y)
\eeq  
where $X:=\{(x,y)\in\sH_A\times\sH_B \,:\,  \langle x\otimes y|\sigma\, x\otimes y\rangle >0  \}$, $\mu_{G_A}$ and $\mu_{G_B}$ denote the standard Gaussian measure on decomplexified $\sH_A$ and $\sH_B$  respectively, the integral is calculated w.r.t. the product of the Gaussian measures.


\section{Conclusions}

In the present work we have shown that the notion of mutual information between continuous random variables can be used to quantify the total amount of correlations among quantum systems. The approach is formalized in the geometric framework where quantum states are described by probability density functions on the complex projective space equipped with a unitary invariant Borel measure.   
Such a classical-like mutual information is a proposed estimator which plays the role of von Neumann mutual information within the considered geometric formulation.\!
 Moreover (\ref{MImatrix}) is the formula to calculate the classical-like mutual information starting from a density matrix then $\cI$ can be also considered a figure of merit in the standard formulation of QM. Let us stress that integration w.r.t. \!$\mu$ to calculate $\cI$ or  differential entropy can be performed as a Gaussian integral on decomplexified Hilbert space, an example of calculation of this kind is given in the proof of proposition 5. Matter for future works could be the in-depth analysis of the interplay between $\cI$ and other quantities of quantum information theory.

\section*{Acknowledgements}
 
The work of the author is supported by:

\vspace{-0.4cm}

\begin{center}
 \includegraphics[width=4cm]{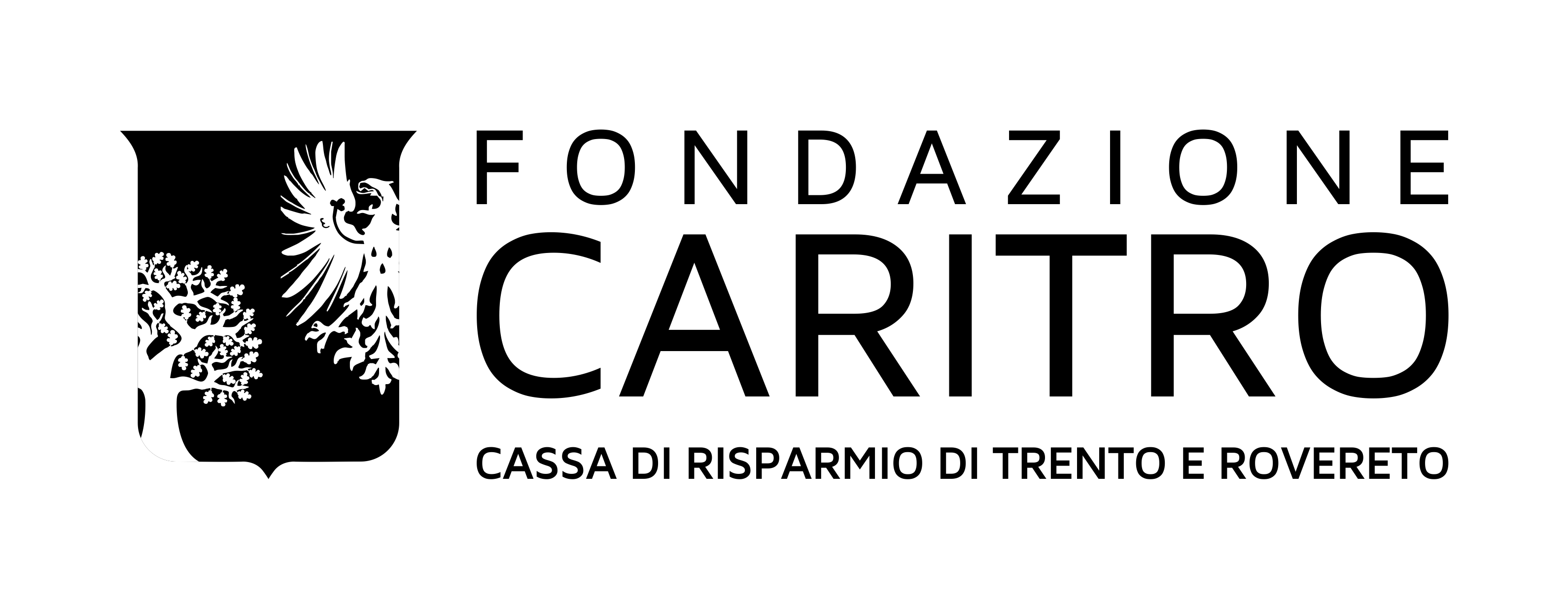}
\end{center}

\vspace{0cm}

\end{document}